\title{{\sc Packing Interval Graphs with Vertex-Disjoint Triangles}}
\author{
 Ton~Kloks 
}
\institute{
 Department of Computer Science\\
 National Tsing Hua University\\ 
 Taiwan
}
\begin{document}

\maketitle

\begin{abstract}
We show that there exists a polynomial algorithm 
to pack interval graphs with vertex-disjoint triangles.
\end{abstract}

\section{Introduction}

Finding the maximal number of vertex-disjoint triangles in a 
graph is a well-known NP-complete problem. 
Whether there exists a polynomial algorithm that solves this 
problem for interval graphs has been open for a 
long time.  

\bigskip 

The packing problem is 
NP-complete for chordal graphs~\cite{kn:dahlhaus,kn:guruswami}. 
Interestingly, the question whether 
a chordal graph can be partitioned into vertex-disjoint triangles 
can be answered in polynomial time~\cite{kn:dahlhaus}. 

For splitgraphs the packing problem can be solved 
via maximum matching.  
There exists a polynomial algorithm that solves the 
packing problem for unit interval graphs~\cite{kn:manic}. 

\bigskip 

In this note we show that the packing problem can be solved 
in polynomial time for the class of interval graphs. 
  
\section{Packing interval graphs}

A graph $G=(V,E)$ is an interval graph if and only if 
it has a consecutive clique arrangement~\cite{kn:halin}. 
That is a 
linear arrangement $\sigma=[C_1,\dots,C_t]$ of the maximal 
cliques in $G$ such that for each vertex $x$, the maximal 
cliques that contain $x$ are consecutive in $\sigma$. 

\begin{theorem}
There exists a polynomial algorithm that computes a 
triangle packing in interval graphs. 
\end{theorem}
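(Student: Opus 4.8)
The plan is to run a left-to-right dynamic program along the consecutive clique arrangement $\sigma=[C_1,\dots,C_t]$ recalled above. Two easy facts get it started. First, every triangle of $G$, being a clique on three vertices, is contained in some maximal clique $C_i$; hence a triangle packing may be encoded by assigning each of its triangles to the \emph{leftmost} clique containing all three of its vertices. Second, identifying each vertex $v$ with the set $I_v=\{a_v,a_v+1,\dots,b_v\}$ of indices of the cliques that contain it — which is precisely the consecutiveness property — a triangle assigned to $C_i$ uses three vertices $v$ with $a_v\le i\le b_v$. The algorithm I would build sweeps $i$ from $1$ to $t$ while maintaining a \emph{pool} of vertices that are present in the current clique and not yet used; handling $C_i$ means: insert the vertices with $a_v=i$, choose how many triangles to create inside $C_i$ and remove three pool vertices for each, and then, before moving on to $C_{i+1}$, discard the pool vertices $v$ with $b_v=i$.

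The first thing I would prove is a \emph{normal form}: some optimal packing has the property that whenever a triangle is formed at $C_i$ it uses the three pool vertices with the smallest right endpoints $b_v$. The argument is a standard exchange: if a triangle formed at $C_i$ uses a pool vertex $u$ while some pool vertex $w$ with $b_w<b_u$ is left unused at $C_i$, then either $w$ is never used — and we just substitute $w$ for $u$ — or $w$ is used later at some $C_k$, and then $k\le b_w<b_u$ together with $u\in C_i$ forces $u\in C_k$, so the roles of $u$ and $w$ can be swapped; choosing an optimal packing that is extremal for a suitable potential kills all such exchanges. The payoff is that, given the pool and the number of triangles to build at $C_i$, the vertices spent are forced, so the only information a state needs at the boundary between $C_i$ and $C_{i+1}$ is the multiset of right endpoints of the pool — equivalently the non-increasing \emph{profile} $\tau\mapsto|\{v\in\mathrm{pool}:b_v\ge\tau\}|$ — together with the number $k$ of triangles used so far.

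The crux — and presumably the reason the question was open — is to show that only polynomially many such states matter. The pool at a boundary equals the fixed set $A_i=\{v:a_v\le i\le b_v\}$ of vertices active at $C_i$ minus the set $U_i\subseteq A_i$ of already-used active vertices, so the profile is controlled entirely by $U_i$, and a larger or ``lower'' $U_i$ can only hurt. I would aim for a \emph{dominance lemma}: if one partial packing reaches $C_i$ with triangle count $k$ and used-active profile pointwise dominated by that of a second, then the first is at least as good for every completion to the right; and, crucially, among all partial packings reaching $C_i$ with a given $k$ (refined, if necessary, by $|U_i|$) there is a single dominant one, obtained by steering the unavoidable ``waste'' onto the vertices that expire earliest. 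As $k\le n/3$, this bounds the number of relevant states per boundary by a polynomial, each with an explicitly computable profile. Proving that such a unique dominant profile exists — equivalently, that the set of achievable profiles is closed under the relevant pointwise maximum — is the main obstacle; I expect it to require a more careful, global exchange than the one used for the normal form.

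To finish I would assemble the dynamic program: at each boundary keep the polynomially many dominant states; to pass from $C_i$ to $C_{i+1}$, insert the vertices born at $C_{i+1}$ (those with $a_v=i+1$), try each admissible number $s\ge0$ of triangles to build there (at most $n/3$ options), remove the $3s$ smallest-$b_v$ pool vertices as the normal form dictates, drop the vertices with $b_v=i+1$, and update $k\mapsto k+s$, in each case collapsing to the dominant state; after $C_t$ the answer is the largest $k$ attained. Correctness is then immediate from the normal-form and dominance lemmas, and since both the number of states per boundary and the work per transition are polynomial, so is the algorithm. I expect the normal form and the bookkeeping to be routine and the dominance lemma to absorb essentially all of the difficulty.
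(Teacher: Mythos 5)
Your plan is not a proof: its load-bearing step is the ``dominance lemma'' (that for each triangle count $k$ the set of achievable pool profiles at a boundary has a single achievable pointwise-maximal element, so that all but polynomially many states can be discarded), and you explicitly leave it as a conjecture, calling it the main obstacle. Without that lemma the normal form alone does not help: under earliest-deadline spending the reachable profiles are indexed by the sequence $(s_1,\dots,s_i)$ of triangles built at each clique, which is exponentially large, and the monotonicity half of your lemma (a dominated used-profile is never worse) does not by itself yield the closure-under-pointwise-maximum property you need. So the proposal is a plausible research programme whose crux is exactly the part that is missing.

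It is also worth noting how the paper avoids this difficulty entirely, with a much smaller state. The key observation is that in a maximal packing every maximal clique has at most two uncovered vertices (three uncovered vertices of a clique would themselves form a triangle). Hence the dynamic programming state at $C_i$ is just a set $S\subseteq C_i$ with $|S|\le 2$ of vertices of $C_i$ left uncovered by a packing of $G_i$, giving $O(|C_i|^2)$ states, and the value $f_i(S)$ is the best triangle count under that constraint. The transition from $C_i$ to $C_{i+1}$ only has to decide which at most two vertices $S'$ of $C_i$ were uncovered before the new vertices arrive; the new triangles live inside $C_{i+1}$ and must cover exactly $C_{i+1}\setminus(C_i\cup S)\cup S''$, so feasibility is a divisibility-by-three check. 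The justification that one may assume $|S'|\le 2$ is a handful of local exchanges that reshuffle triangles straddling $C_i$ and $C_{i+1}\setminus C_i$ --- far more elementary than the global profile-exchange your dominance lemma would require. In short, you track the entire multiset of right endpoints of unused active vertices and then try to compress it, whereas the paper shows that remembering at most two uncovered vertices of the current clique already suffices; if you want to salvage your route, you should first try to prove (or refute) the closure property, but the paper's observation makes that machinery unnecessary.
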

\begin{proof}
Let $G$ be an interval graph and let 
\begin{equation}
\sigma=[C_1,\ldots,C_t]
\end{equation}
be a consecutive clique arrangement for $G$. Write 
$n$ for the number of vertices 
in $G$.  
Notice that $t \leq n$ where $n$ is the number of 
vertices in $G$, since a chordal graph with $n$ vertices has 
at most $n$ maximal cliques. 

\medskip 

\noindent
Consider a triangle packing $\mathcal{T}$. 
A vertex $x$ is covered if $x \in T$ for some 
$T \in \mathcal{T}$.   
Notice that in any maximal packing, every maximal 
clique in $G$ has at most two vertices that are 
not covered. 

\medskip 

\noindent
We describe an algorithm that computes a triangle 
packing via dynamic programming. 
For $i \in \{1,\dots,t\}$, let 
\begin{equation}
\sigma_i = [C_1,\dots,C_i]
\end{equation}
and let $G_i$ be the subgraph of $G$ induced by the 
maximal cliques in $\sigma_i$. 

\medskip 

\medskip 

\noindent
For every $i \in \{1,\dots,t\}$ we keep the following 
invariant. Let $S \subseteq C_i$ with $|S| \leq 2$. 
If there is a triangle packing in $G_i$ that covers 
all vertices of $C_i$ except those in $S$, then 
$f_i(S)$ is the maximal number of triangles in such a 
triangle packing. If there is no such triangle packing 
in $G_i$ then $f_i(S)=0$. 

\medskip 

\noindent
First consider $i=1$. Any maximal triangle packing for 
$G_1$ consists of  
\begin{equation}
\left\lfloor \frac{|C_1|}{3} \right\rfloor  
\end{equation}
triangles. Let 
\begin{equation}
\label{eqn1}
k=|C_1| - 3 \cdot \left\lfloor \frac{|C_1|}{3} \right\rfloor =|C_1| \bmod 3  
\end{equation}
be the number of vertices that are not covered by triangles. 

\medskip 

\noindent
By definition, we have that 
\begin{equation}
f_1(S)=  
\begin{cases} 
\left\lfloor \frac{|C_1|}{3} \right\rfloor & \quad\text{if $|S|=k$, and}\\
0 & \quad\text{otherwise,}
\end{cases}
\end{equation}
where $k$ is given by Equation~\ref{eqn1}. 

\medskip 

\noindent
Consider the transition from $i$ to $i+1$. 
Let $S \subseteq C_{i+1}$ with $|S| \leq 2$. 
We claim that 
\begin{equation}
\label{eqn2}
f_{i+1}(S) = \max_{S^{\prime}} \; f_i(S^{\prime})+ \kappa 
\end{equation}
where 
\begin{enumerate}[\rm(a)]
\item
\label{item a}  
\[S^{\prime} \subseteq C_i \quad\text{and}\quad 
|S^{\prime}| \leq 2 \quad\text{and}\quad 
S \cap C_i \subseteq S^{\prime} \cap C_{i+1} \quad\text{and}\]  
\item 
\label{item b}
$S^{\prime\prime}=
(S^{\prime}\cap C_{i+1}) \setminus S$, and 
\item 
\label{item c}
\[\kappa=\frac{|C_{i+1} \setminus (C_i \cup S) \cup S^{\prime\prime}|}{3} 
\quad\text{is integer.}\] 
\end{enumerate}

\medskip 

\noindent
Obviously, the right-hand side of Equation~(\ref{eqn2}) 
is a lowerbound for $f_{i+1}(S)$.  
We show that it is also an upperbound. 

\medskip 

\noindent 
Consider a maximum triangle packing $\mathcal{T}_{i+1}$ for $G_{i+1}$. 
Let $S \subseteq C_{i+1}$ be the set of vertices that are not covered by  
$\mathcal{T}_{i+1}$. 
Let $\mathcal{T}_i \subseteq \mathcal{T}_{i+1}$ be the set 
of triangles that have all vertices in $G_i$ and let 
$S^{\prime}$ be the set of vertices in $C_i$ that are not 
covered by $\mathcal{T}_i$. 

\bigskip 

\noindent
We show that we may assume that $|S^{\prime}| \leq 2$.  
Assume $|S^{\prime}| \geq 3$. Let $\alpha$, $\beta$, and $\gamma$ 
be three vertices of $S^{\prime}$ that are covered 
by triangles in $\mathcal{T}_{i+1}$. First 
assume that $\{\alpha, \beta,p\}$ and $\{\gamma,q,r\}$ are 
triangles of $\mathcal{T}_{i+1}$ with $p$, $q$ and $r$ in 
$C_{i+1} \setminus C_i$. Then replace these triangles with 
$\{\alpha,\beta,\gamma\}$ and $\{p,q,r\}$. Now assume that 
$\{\alpha, p,q\}$, $\{\beta,r,s\}$ and $\{\gamma,u,v\}$ 
are three triangles of $\mathcal{T}_{i+1}$ with 
$p$, $q$, $r$, $s$, $u$ and $v$ 
in $C_{i+1} \setminus C_i$. Then replace 
the three triangles by $\{\alpha,\beta,\gamma\}$, 
$\{p,q,r\}$ and $\{s,u,v\}$. 

\medskip 

\noindent
Now assume that there are exactly two vertices $\alpha$ and $\beta$ in 
$S^{\prime}$ that are covered by triangles 
$\{\alpha,p,q\}$ and $\{\beta,r,s\}$ 
in $\mathcal{T}_{i+1}$ 
with $p$, $q$, $r$ and $s$ in $C_{i+1} \setminus C_i$. Then we may 
replace these triangles in $\mathcal{T}_{i+1}$ with $\{\alpha,\beta,p\}$ 
and $\{q,r,s\}$. Assume that there exists a vertex 
$z \in S^{\prime}\setminus \{\alpha,\beta\}$. 
Then replace $\{\alpha,\beta,p\}$ with $\{\alpha,\beta,z\}$. 

\medskip 

\noindent
Assume that there exists exactly one vertex $\alpha$ in $S^{\prime}$ 
which is covered by a triangle $\{\alpha,p,q\} \in \mathcal{T}_{i+1}$ 
with $p$ and $q$ 
in $C_{i+1} \setminus C_i$. Assume that there are two  
vertices $y$ and $z$ 
in $S^{\prime} \setminus \{\alpha\}$.  
Then replace the triangle $\{\alpha,p,q\}$ 
in $\mathcal{T}_{i+1}$  
with $\{\alpha,y,z\}$. 

\medskip 
  
\noindent 
The replacements above 
don't change the number of triangles in $\mathcal{T}_{i+1}$. 
In all cases $|S^{\prime}|$ decreases, and so this proves the claim. 

\bigskip 

\noindent
The 
vertices of $C_{i+1}\setminus (C_i \cup S) \cup S^{\prime\prime}$ 
are in triangles that have at least one vertex in 
$C_{i+1} \setminus C_i$, where $S^{\prime\prime}$ is defined as in 
item~(\ref{item b}). The number of these triangles is $\kappa$, as 
defined in item~(\ref{item c}). Thus $\kappa$ must be an integer.  

\medskip 

\noindent
We show that this algorithm runs in polynomial time. 
For the computation of $f_{i+1}$, the algorithm 
considers $O(|C_{i+1}|^2)$ subsets 
$S$. To compute the maximum in Equation~(\ref{eqn2}) the 
algorithm considers  
all subsets $S^{\prime} \subseteq C_i$ with at most two 
vertices. The table look-up of $f_i(S^{\prime})$ 
and the check if the choice of 
$S$ and $S^{\prime}$ yields  
an integer $\kappa$ as in item~(\ref{item c}) 
take constant time. The final answer is 
obtained by looking up the maximal value $f_t(S)$ 
over $S \subseteq C_t$ with $|S| \leq 2$. 
This 
shows that the algorithm can be implemented  
to run in time proportional to  
\[\sum_{i=1}^{t-1} \; |C_i|^2 \cdot |C_{i+1}|^2 + |C_t|^2 = O(n^5).\] 

\medskip 

\noindent 
This prove the theorem. 
\qed\end{proof}

\end{document}